\documentclass[11pt,letterpaper]{article}

\usepackage[margin=1in]{geometry}
\usepackage[T1]{fontenc}
\usepackage{amsmath,amssymb,amsthm,mathtools}
\usepackage{booktabs}
\usepackage{enumitem}
\usepackage{microtype}
\usepackage{xcolor}
\usepackage{authblk}
\usepackage[
  colorlinks=true,
  linkcolor=blue!55!black,
  citecolor=green!42!black,
  urlcolor=blue!65!black,
  pdfauthor={Chenghua Liu and Boning Meng},
  pdftitle={Bounded Relative Boundary Implies Narrow DNF Approximation}
]{hyperref}

\newtheorem{theorem}{Theorem}[section]
\newtheorem{lemma}[theorem]{Lemma}

\newtheorem{conjecture}[theorem]{Conjecture}
\theoremstyle{definition}

\theoremstyle{remark}
\newtheorem{remark}[theorem]{Remark}

\newcommand{\bits}{\{0,1\}}
\newcommand{\E}{\mathbb E}
\renewcommand{\Pr}{\mathbb P}
\newcommand{\Var}{\operatorname{Var}}
\newcommand{\one}{\mathbf 1}
\newcommand{\sens}{\operatorname{sens}}
\newcommand{\cA}{\mathcal A}
\newcommand{\cB}{\mathcal B}
\newcommand{\cE}{\mathcal E}
\newcommand{\cF}{\mathcal F}
\newcommand{\cJ}{\mathcal J}
\newcommand{\cS}{\mathcal S}

\setlist[itemize]{leftmargin=2em,itemsep=0.2em,topsep=0.35em}
\setlist[enumerate]{leftmargin=2.25em,itemsep=0.2em,topsep=0.35em}
\allowdisplaybreaks

\title{Bounded Relative Boundary Implies\\
Narrow DNF Approximation}
\author[1]{Chenghua Liu}
\author[2]{Boning Meng}
\affil[1]{Institute of Software, Chinese Academy of Sciences, Beijing, China}
\affil[2]{University of Regensburg, Regensburg, Germany}
\affil[ ]{\texttt{liuch.russell\@gmail.com},
  \texttt{mengboning2013\@gmail.com}}
\date{}

\begin{document}

\maketitle

\begin{abstract}
  Friedgut conjectured that an increasing family in the $p$-biased discrete
  cube with bounded relative boundary can be approximated arbitrarily well by
  one whose minimal elements have bounded size, with a bound independent of
  the dimension and the bias (\mbox{\emph{J. Amer. Math. Soc.} 12 (1999)}).
  We prove this conjecture by showing that, for $0<p\le1/2$, every increasing
  Boolean function with total resampling influence at most $K$ is
  $\varepsilon$-close under $\mu_p^n$ to a monotone DNF of width
  $\exp(O((K+1)^2/\varepsilon^2))$.  A separate high-bias argument completes
  the proof for all $p\in(0,1)$.  Our proof builds on Hatami's pseudo-junta
  theorem (\mbox{\emph{Ann. of Math.} 176 (2012)}).  Tracking Hatami's
  construction isolates an adaptive representation with increasing local
  activations and dimension-free arity and multiplicity-counted load bounds.
  Our main new ingredient is a bias-matched randomized shifting procedure that
  converts the pseudo-junta approximator into an increasing function while
  retaining exact measurability with respect to a controlled forced refinement
  of its adaptive representation.  From the resulting monotone adaptive
  representation, we extract positive certificates and truncate them to obtain
  the required narrow DNF.
\end{abstract}

\section*{Acknowledgments}
The authors used OpenAI's ChatGPT in preparing this manuscript, including for
language editing and \LaTeX{} preparation.  ChatGPT also contributed a key
observation that led to the randomized shifting argument.  All claims and
proofs were independently checked and finalized by the authors, who take full
responsibility for the content.
\section{Introduction}
\label{sec:introduction}

The analysis of Boolean functions studies how local behavior on the discrete
cube constrains global structure.  Foundational results connect local
sensitivity to sharp thresholds and low-complexity approximation
\cite{KahnKalaiLinial1988,FriedgutKalai1996,Friedgut1998}.  A central theme is
that a function that rarely changes under local perturbations should admit a
simple structural description.

Let $\bits^n$ denote the discrete cube and write $[n]=\{1,\ldots,n\}$.  For
$p\in(0,1)$, let $\mu_p$ be the Bernoulli measure on $\bits$ with
$\mu_p(1)=p$, and let
$\mu_p^n$ be its product measure on $\bits^n$.  We write $\Pr_p$, $\E_p$,
and $\Var_p$ for probability, expectation, and variance under $\mu_p^n$, using
the same notation for its coordinate marginals.  The Hamming weight of
$x\in\bits^n$ is
$|x|=\sum_{i=1}^n x_i$.  A Boolean function $f:\bits^n\to\bits$ is
\emph{increasing} if $x\le y$ coordinatewise implies $f(x)\le f(y)$.  If
$x^{\oplus i}$ is obtained by flipping coordinate $i$, define
\[
       \sens_f(x)=\bigl|\{i\in[n]:f(x)\ne f(x^{\oplus i})\}\bigr|.
\]
For $X\sim\mu_p^n$, let $X^{(i)}$ be obtained by independently resampling
coordinate $i$.  We use the total resampling influence
\begin{equation}
\label{eq:intro-influence}
       I_p(f)=\sum_{i=1}^n\Pr_p\bigl[f(X)\ne f(X^{(i)})\bigr].
\end{equation}

A \emph{$k$-junta} is a Boolean function that depends on at most $k$
coordinates.  When $p$ is bounded away from zero and one, Friedgut's junta
theorem states that bounded total influence implies approximation by a
$k$-junta for some $k$ independent of $n$ \cite{Friedgut1998}.  This conclusion is no
longer appropriate when $p$ tends to zero.  At $p=1/n$, for example, the
function $\operatorname{OR}_n(x)=x_1\vee\cdots\vee x_n$ satisfies
$I_{1/n}(\operatorname{OR}_n)=2(1-1/n)^n=O(1)$ but remains a constant distance
under $\mu_{1/n}^n$ from every $k$-junta with fixed $k$.  Yet one can certify that
$\operatorname{OR}_n(x)=1$ by revealing a single coordinate equal to one.
Sparse random graphs exhibit the same distinction.  A copy of a fixed path or
triangle may occur in many locations, but each copy is certified by constantly
many edges.

For an increasing Boolean function $f$, the Margulis--Russo identity
\cite{Margulis1974,Russo1981} gives
\begin{equation}
\label{eq:intro-margulis-russo}
  I_p(f)=2p(1-p)\E_p[\sens_f(X)]
  =2p(1-p)\frac{d}{dp}\E_p[f].
\end{equation}
Thus Friedgut's quantity
\[
  p\E_p[\sens_f(X)]
  =\frac{I_p(f)}{2(1-p)}
  =\frac{d}{d\log p}\E_p[f]
\]
measures the derivative of the acceptance probability $\E_p[f]=\Pr_p[f=1]$
on the multiplicative rather than additive scale of $p$.  We therefore refer
to $p\E_p[\sens_f(X)]$ as the relative boundary.

The example of $\operatorname{OR}_n$ points instead to short input-dependent
witnesses, which is precisely the point of Friedgut's conjecture.  For
$T\subseteq[n]$, a \emph{positive term} is the conjunction
$\bigwedge_{i\in T}x_i$.  A \emph{monotone DNF} is a disjunction of positive
terms, and its width is the largest $|T|$ among them.  Each term is a local
positive witness because setting its coordinates to one forces acceptance.
We use the usual conventions that the empty conjunction is one, the empty
disjunction is zero, and both constant functions have width zero.
For a statement $P$, write $\one[P]$ for its indicator.  For
$\cB\subseteq\bits^n$, write $\one_{\cB}(x)=\one[x\in\cB]$, and call $\cB$
increasing when $\one_{\cB}$ is increasing.  An element $z\in\cB$ is
minimal if no distinct $y\in\cB$ satisfies $y\le z$.  The associated terms
$\bigwedge_{i:z_i=1}x_i$ form the canonical monotone DNF of $\one_{\cB}$.
Hence bounded width is equivalent to bounded Hamming weight of the minimal
elements.  Friedgut formulated this local-witness principle in
\cite[Conjecture~1.5]{Friedgut1999}.  It predicts that, after changing a
function of bounded relative boundary on an arbitrarily small set, one obtains
a function described by bounded-size positive witnesses even when no fixed
bounded set of coordinates controls the function.  For families
$\cA,\cB\subseteq\bits^n$, write $\cA\mathbin\triangle\cB$ for their symmetric
difference.

\begin{conjecture}[Friedgut's conjecture]
\label{conj:friedgut}
For every $c>0$ and $0<\varepsilon<1$, there is
$k=k(c,\varepsilon)<\infty$ such that the following holds for every
$p\in(0,1)$ and $n\ge1$.  If $\cA\subseteq\bits^n$ is increasing and
\[
       p\E_p[\sens_{\one_{\cA}}(X)]\le c,
\]
then there is an increasing $\cB\subseteq\bits^n$ satisfying
$\mu_p^n(\cA\mathbin\triangle\cB)\le\varepsilon$ whose minimal elements all
have Hamming weight at most $k$.
\end{conjecture}

We prove Friedgut's conjecture.  Our main theorem establishes its quantitative
low-bias form with an explicit width bound independent of $n$ and $p$.  We then
use a standard high-bias reduction to recover the all-bias statement above.

\begin{theorem}[Narrow-DNF approximation]
\label{thm:main}
There is an absolute constant $C>0$ such that the following holds.  Let
$n\ge1$, $K>0$, $0<p\le1/2$, and $0<\varepsilon<1$.  If an increasing
function $f:\bits^n\to\bits$ satisfies $I_p(f)\le K$, then there is a
monotone DNF $g$ of width at most
$\exp\!\left(C(K+1)^2/\varepsilon^2\right)$ such that
\[
       \Pr_p[f(X)\ne g(X)]\le\varepsilon.
\]
\end{theorem}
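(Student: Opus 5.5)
The plan follows the route sketched in the abstract: Hatami's pseudo-junta theorem, then a monotonization step, then certificate extraction and truncation.

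\textbf{Step 1: pseudo-junta approximation.} Apply Hatami's theorem to $f$ under $\mu_p^n$ with accuracy $\varepsilon/3$. This gives an approximator $h$ with $\Pr_p[f\ne h]\le\varepsilon/3$. The function $h$ is measurable with respect to an adaptive representation. Here a random set $J(x)$ of coordinates is revealed by local rules. Each rule reads a bounded number of coordinates (the arity) and activates further coordinates, and the value of $h$ depends only on $x|_{J(x)}$.

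Unwinding Hatami's iterative construction, I would verify three properties. First, the activations are increasing: revealing a coordinate equal to one can only enlarge the revealed set. Second, the arity is bounded by $r=\exp(O((K+1)^2/\varepsilon^2))$. Third, a multiplicity-counted load bound holds: $\E_p|\{\text{activated blocks containing } i\}|$ summed appropriately is $O(K/\varepsilon)$, independent of $n$ and $p$. Throughout, the hypothesis $p\le 1/2$ is used so that the resampling and flip influences are comparable.

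\textbf{Step 2: monotonization.} This is the main obstacle. The function $h$ need not be increasing, and naive monotone closure destroys measurability with respect to the representation. What I would try first is a bias-matched randomized shifting.

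For each revealed $0$ coordinate, independently replace it by $1$ with a probability chosen so that the shifted point is still $\mu_p$-distributed conditional on the revealed pattern. Then average $h$ over the shift, or take a threshold of a shifted copy with shared randomness. The result is increasing, because raising a coordinate only moves it within the coupling. Since $f$ is already increasing and agrees with $h$ on most mass, the shifted function stays $O(\varepsilon)$-close to $f$; I would check this with a union bound over the coupling, paying $\varepsilon/3$.

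The new function must depend only on a forced refinement of the representation. That refinement additionally reveals the coordinates touched by the shift. I would bound these refinements with the load bound, so the arity grows by at most a factor absorbed into $r$. Fixing the random seed by averaging gives a deterministic increasing $\tilde h$.

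\textbf{Step 3: certificates and truncation.} For increasing $\tilde h$ measurable with respect to a monotone adaptive representation, each $x$ with $\tilde h(x)=1$ has a positive certificate. Take the set $T(x)$ of revealed ones: by monotonicity of activation and of $\tilde h$, setting $x_T=1$ forces $\tilde h=1$.

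Let $g=\bigvee\{\bigwedge_{i\in T(x)}x_i : \tilde h(x)=1,\ |T(x)|\le k\}$. Then $g\le\tilde h$, and $g(x)=\tilde h(x)$ unless $|T(x)|>k$. By Markov applied to the load bound (expected number of revealed ones is $O(rK/\varepsilon)$), choosing $k=\exp(C(K+1)^2/\varepsilon^2)$ makes $\Pr_p[|T(X)|>k]\le\varepsilon/3$. Summing the three errors gives the theorem.
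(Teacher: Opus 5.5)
Your Step~3 matches the paper's certificate lemma, but Step~2, which you rightly flag as the crux, does not work as proposed. A map that only raises coordinates, and sometimes strictly, increases the expected Hamming weight. So the shifted point $X'$ is never $\mu_p^n$-distributed. Since $h$ is close to $f$ only under $\mu_p^n$, nothing controls $h(X')$, and a union bound over the coupling would also have to pay for $f(X)\ne f(X')$. Averaging or thresholding $h$ over upward input noise also does not make it increasing. For $h(x)=1-x_1$ with $x_1$ always revealed, raising $x_1$ with probability $q<1$ gives average $1-q$ at $x_1=0$ and $0$ at $x_1=1$, which is still decreasing.

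The missing idea is that the randomness chooses between two deterministic operators on the \emph{function}, leaving the input alone. In coordinate $i$, the only non-monotone fiber $(v_0,v_1)=(1,0)$ becomes $(1,1)$ (the upper shift $U_i$) with probability $1-p$, or $(0,0)$ (the lower shift $L_i$) with probability $p$. If the fiber of the increasing target is $00$, $01$, or $11$, the conditional error changes from $1-p$, $1$, $p$ to $1-p$, $(1-p)^2+p^2$, $p$, so it never increases. Processing every coordinate yields an increasing $H_D$ with $\E_D\Pr_p[f\ne H_D]\le\Pr_p[f\ne h]$, where $D\sim\mu_p^n$ is the set of lower shifts.

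The measurability bookkeeping also differs from what you describe, and it dictates which Hatami parameter must be small. With increasing activations, $U_i v$ stays $\cF_{\cJ}$-measurable, because the upper-endpoint atom determines the lower-endpoint atom. For $L_i v$ you must record the coordinates that \emph{would} become active if $x_i$ were raised, with their true labels. The forced family $J_S^{\{i\}}(z)=J_S(z^{\{i\}\cap S\leftarrow1})$ does exactly this. Forcing leaves arity unchanged but moves the effective bias of every activation from $p$ to $2p-p^2$. Hence $\E_D\Lambda_p(\cJ^D)=\Lambda_{2p-p^2}(\cJ)\le2^{d}\Lambda_p(\cJ)$, where $d$ is the arity.

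So what must be small is the arity, and Hatami's construction gives the reverse of your assignment. Its arity is at most the degree cutoff $O((K+1)/\varepsilon)$. Its load is only $\exp(O((K+1)^2/\varepsilon^2))$, coming from the tiny threshold $\delta$ in $J_T(y)=\one[b_T\ge\delta\mu_p^T(y)]$. It cannot give load $O(K/\varepsilon)$: for balanced tribes at $p=1/2$, every singleton activation in it is identically one, so $\E_p|A_{\cJ}(X)|=n=2^{\Theta(I_p(f))}$. With arity $\exp(O((K+1)^2/\varepsilon^2))$, the factor $2^{d}$ would make the width doubly exponential.

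Two smaller corrections. First, $p\le1/2$ is used because $\mu_p^T(y)$ is nonincreasing in $y$, which makes these activations increasing. It is not about comparing resampling and flip influences, which differ by the factor $2p(1-p)$ and are not comparable as $p\to0$. Second, to fix one realization $D$ you must control error and active-set size simultaneously. For example, take $D$ with $e_D+(\tau/B)M_D\le e+\tau$, where $e_D$ and $M_D$ are the error and expected active-set size of $H_D$ and $B=(2-p)^dL$.
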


Together with the high-bias reduction, Theorem~\ref{thm:main} resolves
Friedgut's conjecture \cite[Conjecture~1.5]{Friedgut1999} and the
bounded-minimal-element refinement of Hatami's pseudo-junta theorem
\cite[Section~6]{Hatami2012}.  Our result shows that narrow monotone DNFs,
rather than ordinary juntas, provide the appropriate dimension-free
approximating class for sparse product spaces.

\paragraph{The gap after Hatami.}
Hatami's structure theorem replaces an ordinary junta by a
\emph{pseudo-junta} \cite{Hatami2012}.  For $S\subseteq[n]$, write $x_S$ for
the restriction of $x$ to $S$.  Hatami supplies a family
$\cJ=(J_S)_{S\subseteq[n]}$ of local activation functions
$J_S:\bits^S\to\bits$ and an approximator $h$ of $f$.  On input $x$, the
activations determine the information
\begin{equation}
\label{eq:intro-adaptive-map}
  A_{\cJ}(x)=\bigcup_{S:J_S(x_S)=1}S,
  \qquad
  \Phi_{\cJ}(x)=\bigl(A_{\cJ}(x),x_{A_{\cJ}(x)}\bigr).
\end{equation}
Thus $\Phi_{\cJ}(x)$ records the coordinates activated at $x$ together with
their values.  We call two inputs $x,y\in\bits^n$ members of the same
\emph{adaptive atom} if $\Phi_{\cJ}(x)=\Phi_{\cJ}(y)$, and these atoms form the
adaptive partition of $\bits^n$.  We identify an atom with the corresponding
value of $\Phi_{\cJ}$ when convenient.  The value of $h$ depends only on the
information recorded by $\Phi_{\cJ}$.  Equivalently, there is a readout map
$\Gamma$ such that $h(x)=\Gamma(\Phi_{\cJ}(x))$, so $h$ is constant on every
adaptive atom.  We refer to this property as measurability with respect to the
adaptive partition.  For $X\sim\mu_p^n$, Hatami's representation satisfies
$\Pr_p[f(X)\ne h(X)]=O(\varepsilon)$ and
$\E_p|A_{\cJ}(X)|=O_{K,\varepsilon}(1)$.  When $p\le1/2$, the activations in
Hatami's construction are increasing.  The approximator $h$, however, need
not be increasing, even when the target $f$ is.  This is the precise point at
which Hatami's theorem stops short of Friedgut's conjecture.  In general, this
defect cannot be repaired by choosing an increasing decoder on the same
adaptive partition.  Appendix~\ref{app:atoms} gives a three-variable
obstruction.

\paragraph{Proof architecture.}
Starting from $f$, we first obtain Hatami's approximator $h$, then construct an
increasing adaptive approximator $u$, and finally extract a monotone DNF $g$.
Our proof proceeds through the following sequence of transformations.
\[
  f
  \xrightarrow[\text{Hatami}]{\text{adaptive structure}}
  h
  \xrightarrow[\text{this paper}]{\text{controlled monotonicization}}
  u
  \xrightarrow[\text{this paper}]{\text{positive certificates}}
  g.
\]
We extract and track two quantitative bounds from Hatami's activation family.
Every nonzero activation examines at most
$d=O((K+1)/\varepsilon)$ coordinates.  We also
introduce the \emph{multiplicity-counted activation load}
\begin{equation}
\label{eq:intro-load}
  \Lambda_p(\cJ)
  =\sum_{S\subseteq[n]}|S|\Pr_p[J_S(X_S)=1]
\end{equation}
and show directly from Hatami's construction that it is at most
$\exp(O((K+1)^2/\varepsilon^2))$.  This load counts overlapping active sets
with multiplicity.

For the second arrow, we process the coordinates one at a time.  At each
coordinate, after fixing all other coordinates, we inspect the pair of values
of the current approximator at its zero and one endpoints.  The only pair that
violates monotonicity is $10$.  An upper shift changes this pair to $11$, while
a lower shift changes it to $00$.  We choose the upper shift with probability
$1-p$ and the lower shift with probability $p$.  The randomness is only in the
choice between these two deterministic transformations; the input itself is
not resampled.  Since the zero and one endpoints have $\mu_p$-weights $1-p$
and $p$, respectively, this choice ensures that the expected error against
every increasing target does not increase.  We make the choices independently
across coordinates, and every realization after all coordinates have been
processed is increasing.

The difficult part is to retain controlled adaptive measurability.  We show that
the function produced by an upper shift remains measurable with respect to the
old adaptive partition.  Because the activations are increasing, the atom at
an upper endpoint determines the atom at the corresponding lower endpoint.  A
lower shift requires more information.  Raising the coordinate from zero to
one may create new activations, but the lower atom does not record the values
of the newly activated coordinates.  We encode this missing information by
forcing the shifted coordinate to one inside every activation.  Let $D$ be the
set of coordinates on which lower shifts were chosen.  For $z\in\bits^S$, let
$z^{D\cap S\leftarrow1}$ be obtained
by setting the coordinates in $D\cap S$ to one.  We define the family of
forced activations $\cJ^D=(J_S^D)_{S\subseteq[n]}$ by
\begin{equation}
\label{eq:intro-forcing}
       J_S^D(z)=J_S\bigl(z^{D\cap S\leftarrow1}\bigr),
\end{equation}
and show that the final shifted function is measurable with respect to these
activations.
Since each coordinate belongs to $D$ with probability $p$, forcing a random
$D$ changes the effective bias from $p$ to $p'=2p-p^2$.  For each activation
involving at most $d$ coordinates, the likelihood ratio between $\mu_{p'}$ and
$\mu_p$ is at most $2^d$.  Writing $\E_D$ for expectation over the random shift
choices, we obtain
\[
       \E_D\Lambda_p(\cJ^D)\le2^d\Lambda_p(\cJ).
\]
We then use a weighted averaging argument to select one realization with both
small error and small active-set cost.

For the last arrow, we use monotonicity of both $u$ and the forced activations
$\cJ^D$.  Fix $x$ with $u(x)=1$ and set every coordinate outside
$A_{\cJ^D}(x)$ to zero.  Active activations remain active because their entire
indexing sets lie in the active union, while inactive activations remain
inactive by monotonicity.  The adaptive atom, and hence the value of $u$, is
unchanged.  It follows that the coordinates in $A_{\cJ^D}(x)$ at which $x$ has
value one form an exact positive certificate for $u$, meaning that every input
that is one on these coordinates is accepted by $u$.  We keep only certificates
of size at most a cutoff $w$ to obtain a width-$w$ monotone DNF, and we use
Markov's inequality to bound the discarded mass.

\paragraph{Relation to prior work.}
Friedgut and Kalai established sharp thresholds on the additive $p$-scale for
families invariant under a transitive permutation group
\cite{FriedgutKalai1996}.  Friedgut's junta theorem describes the dense regime
\cite{Friedgut1998}.  His later work used graph and hypergraph symmetries to
show that slowly changing acceptance probabilities in sparse spaces are
governed by bounded local witnesses \cite{Friedgut1999}.  Bourgain's appendix
found a bounded local conditioning that raises the acceptance probability, but
not an approximation by bounded minimal elements
\cite[Appendix]{Friedgut1999}.  Hatami removed symmetry at
the level of adaptive pseudo-juntas and explicitly left the bounded-minimal-
element refinement open \cite[Section~6]{Hatami2012}.  Keller and Lifshitz work
on the uniform cube, in their influence normalization, under the
near-isoperimetric hypothesis
\[
       I(f)\le 2\mu\bigl(\log_2(1/\mu)+M\bigr),\qquad \mu=\E f,
\]
and obtain an $\varepsilon\mu$-approximation by a DNF of size
$2^{2^{O(M/\varepsilon)}}$ \cite{KellerLifshitz2018}.  Their
near-isoperimetric hypothesis and relative-error objective differ from our
bounded-relative-boundary hypothesis and additive-error conclusion; our result
controls width rather than DNF size.  Dinur, Filmus, and Harsha obtained
sparse-DNF structure under the additional assumption that the target is close
to a low-degree polynomial
\cite{DinurFilmusHarsha2024}.  Coordinate shifting is classical and has also
been used for monotone DNF approximators \cite{BlaisHastadServedioTan2014}.
Our contribution is a bias-matched monotonicization of Hatami's adaptive
structure: the resulting function is exactly measurable with respect to a
controlled forced refinement whose expected information cost remains
dimension-free.

\section{Hatami's pseudo-juntas}
\label{sec:pseudo-juntas}

We continue with the notation from the introduction.  For $i\in[n]$ and
$b\in\bits$, write $x^{i\leftarrow b}$ for the point obtained by setting
coordinate $i$ to $b$.  For $D\subseteq[n]$, define $x^{D\leftarrow1}$
analogously by setting every coordinate in $D$ to one.

Let $\cJ=(J_S)_{S\subseteq[n]}$ be an activation family as in
\eqref{eq:intro-adaptive-map}.  We write
$\cF_{\cJ}=\sigma(\Phi_{\cJ})$ for the $\sigma$-algebra generated by its adaptive
partition.  Thus $\cF_{\cJ}$-measurability is precisely the measurability
notion defined in the introduction.  Following Hatami, an $M$-pseudo-junta is
a Boolean function $h$ that is $\cF_{\cJ}$-measurable for some $\cJ$ satisfying
$\E_p|A_{\cJ}(X)|\le M$.

Our argument uses two quantitative parameters of the activation family.  We
define its \emph{arity}, denoted $\operatorname{arity}(\cJ)$, as the largest $|S|$ for which
$J_S\not\equiv0$, with value zero if no such $S$ exists.  Its
\emph{multiplicity-counted activation load} is $\Lambda_p(\cJ)$ from
\eqref{eq:intro-load}.  These parameters control the active union through
\begin{equation}
\label{eq:union-load}
       |A_{\cJ}(x)|
       \le\sum_{S\subseteq[n]}|S|J_S(x_S),
       \qquad
       \E_p|A_{\cJ}(X)|\le\Lambda_p(\cJ).
\end{equation}
The load can be much larger than the expected active-union size when active
indexing sets overlap.

We use the following explicit extraction from Hatami's $p$-biased construction.

\begin{lemma}[Hatami's construction with arity and load]
\label{lem:hatami-input}
Let $0<p\le1/2$, let $f:\bits^n\to\bits$, and let $0<a\le1$.  Set
$C_f=\lceil I_p(f)\rceil$,
$d=\lceil 10^3C_f/a\rceil$, and
$L=\exp(10^{10}a^{-2}C_f^2)$.
There are increasing activations $\cJ=(J_S)_{S\subseteq[n]}$ and a Boolean
function $h$ such that
\begin{enumerate}[label=\textup{(\roman*)},nosep]
\item $\Pr_p[f\ne h]\le a$;
\item $h$ is $\cF_{\cJ}$-measurable;
\item $\operatorname{arity}(\cJ)\le d$; and
\item $\Lambda_p(\cJ)\le L$.
\end{enumerate}
\end{lemma}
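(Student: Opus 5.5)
We reprove Hatami's $p$-biased pseudo-junta theorem \cite{Hatami2012}, keeping its construction but tracking two extra quantities: the size of each activation set and the load counted with multiplicity. Hatami's construction is built around a threshold parameter $\delta$ for \emph{$\delta$-significant} restrictions. For $S\subseteq[n]$ and $z\in\bits^S$ we look at the restricted function $f_{S\leftarrow z}$ and at a conditional influence quantity computed on $S$; significance means this quantity exceeds $\delta$. The activation $J_S(z)$ is set to one when $z$ is a significant partial assignment of bounded size that is minimal in the appropriate sense. For $p\le1/2$, Hatami's significance criterion looks at the conditional influences of coordinates fixed to one. Raising coordinates can only keep a restriction significant, so $J_S$ is increasing; we will check this explicitly. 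The readout $h$ is defined by majority (rounding $\E[f\mid\cF_{\cJ}]$). By the usual $L^2$ argument, the error bound (i) reduces to showing that $f$ has small total influence outside the active set. Item (ii) holds by construction.

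\textbf{Arity (iii).} We only allow sets $S$ with $|S|\le d$. We then show that truncating at $d=\lceil 10^3C_f/a\rceil$ costs at most $a/2$ in error. On average over $X$, a significant restriction involving $|S|$ coordinates uses up about $|S|\delta$ units of relative influence. By the Margulis--Russo identity \eqref{eq:intro-margulis-russo} and $p\le1/2$, the relative influence is bounded by $2C_f$. Markov's inequality then shows that the probability some activated set needs more than $d$ coordinates is at most $2C_f/(\delta d)\cdot O(1)$. With $\delta$ of order a constant, this is $\le a/2$ for the stated $d$. In practice we follow Hatami's choice $\delta\approx a/C_f$ together with the corresponding accounting, and fix the constants at the end.

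\textbf{Load (iv).} Write
\[
\Lambda_p(\cJ)=\sum_S|S|\Pr_p[J_S=1]\le d\,\E_p\Bigl[\#\{S:J_S(X_S)=1\}\Bigr],
\]
so it is enough to bound the expected number of active sets. Hatami's energy-increment argument has two parts. The number of refinement rounds is bounded by $O(C_f/a)$, since each round raises the energy by $\Omega(a^2/C_f)$ out of a total of at most $C_f$. In each round, a given point activates at most $2^{d}$ minimal significant subsets of its current witness. Multiplying these gives an expected count of at most $\exp(O(d\cdot C_f/a))=\exp(O(C_f^2/a^2))$. This fits under $L=\exp(10^{10}a^{-2}C_f^2)$ after tracking the constants.

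\textbf{Main obstacle.} The hard step is the load count. Hatami controls only $\E_p|A_{\cJ}(X)|$, and his active sets can overlap heavily. We therefore have to go back to his iterative construction and bound how many distinct sets $S$ can be active at a single input. The plan is to make each round's activations depend only on the restriction to the union already revealed, plus at most $d$ new coordinates. Then the number of active sets per round at a point is at most the number of subsets of a set of size $O(d\cdot\text{rounds})$. Summing over rounds gives the $\exp(O(C_f^2/a^2))$ bound. If that double-exponential-free bookkeeping fails, the fallback is to merge overlapping activations into their union, which does not change $\Phi_{\cJ}$. This is allowed provided the merged activations stay increasing and within arity $O(d)$.
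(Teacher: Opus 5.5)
\textbf{There is a genuine gap at the load bound (iv).} You try to bound $\E_p[\#\{S:J_S(X_S)=1\}]$ by counting, at each input, subsets of a revealed union of size $O(d\cdot\text{rounds})$. The active union, however, is controlled only in expectation, not pointwise. For instance, in the construction at hand, for $\operatorname{OR}_n$ at $p=1/n$ every coordinate equal to one is activated. Subset counting therefore needs a bound on $\E_p 2^{|A_{\cJ}(X)|}$, and that does not follow from a bound on $\E_p|A_{\cJ}(X)|$ (convexity goes the wrong way).

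There are also no rounds to count. The activation family the lemma tracks is defined in one shot from the $p$-biased Walsh expansion, not by an energy-increment iteration. Your merging fallback does not help either: it need not preserve $\Phi_{\cJ}$, and it can blow up arity along chains of overlaps.

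The missing idea is that no pointwise counting is needed. $\Lambda_p$ is linear in the activations, and the activations are explicit thresholds of the point mass, $J_T(y)=\one[b_T\ge\delta\mu_p^T(y)]$, where $b_T=\sum_{S\in\cS,\,S\supseteq T}\Pr_p[|F_S(X_S)|\ge\varepsilon_1]$ is a global constant. The argument is then a first-moment computation:
\begin{itemize}
\item Pointwise, $J_T(y)\le b_T/(\delta\mu_p^T(y))$.
\item Since $\E_p[1/\mu_p^T(X_T)]=2^{|T|}$, this gives $\E_pJ_T(X_T)\le2^{|T|}b_T/\delta$.
\item Weight by $|T|\le k$ and exchange the sums, using $\sum_{T\subseteq S}2^{|T|}=3^{|S|}\le2^{2k}$.
\item Markov plus Parseval give $\sum_{S\in\cS}\Pr_p[|F_S(X_S)|\ge\varepsilon_1]\le\varepsilon_1^{-2}$.
\end{itemize}
Together these yield $\Lambda_p(\cJ)\le k2^{2k}/(\delta\varepsilon_1^2)\le L$ for Hatami's constants. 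This is the same computation that bounds $\E_p|A_{\cJ}(X)|$, just without discarding multiplicity.

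\textbf{The other items rest on a picture of the construction that does not match the one being tracked.} Your monotonicity argument fails as stated. A significance test based on conditional influences of a restriction is not preserved when coordinates are raised: for increasing $f=x_1\vee g$, fixing $x_1=1$ makes the restriction constant. In the actual family, $J_T$ is increasing only because $b_T$ does not depend on $y$, while $\mu_p^T(y)=p^{|y|}(1-p)^{|T|-|y|}$ is nonincreasing in $y$ for $p\le1/2$.

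Your arity accounting is internally inconsistent. With $\delta\approx a/C_f$ and $d=\lceil10^3C_f/a\rceil$, your bound $2C_f/(\delta d)$ equals about $2C_f/10^3$, not something $\le a/2$. In the paper, arity is immediate from the Fourier-degree cutoff $|S|\le k=10^3C_f/a$ defining $\cS$, since activations vanish for $|T|>k$.

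Finally, (i) is Hatami's estimate $\|f-\E_p[f\mid\cF_{\cJ}]\|_2^2\le10\varepsilon_0$ followed by rounding at $1/2$. It is not a reduction to small influence outside the active set.
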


Tracking Hatami's activation family makes its monotonicity, arity, and load
guarantees explicit in the form needed below; see
\cite[Remark~2.9 and Sections~4.3--4.4]{Hatami2012}.  Hatami retains a
bounded-degree family $\cS$ of significant
components in the $p$-biased Walsh expansion
$f(x)=\sum_S F_S(x_S)$.  For
suitable $\eta,\delta>0$, set
$b_T=\sum_{S\in\cS:S\supseteq T}\Pr_p[|F_S(X_S)|\ge\eta]$ and let
$\mu_p^T(y)=p^{|y|}(1-p)^{|T|-|y|}$ be the point mass of $y\in\bits^T$.  The
activation at $T$ is $J_T(y)=\one[b_T\ge\delta\mu_p^T(y)]$.  The degree cutoff
bounds the arity.  When $p\le1/2$, the point mass $\mu_p^T(y)$ is
nonincreasing in $y$, so every $J_T$ is increasing.  Moreover,
\[
       J_T(y)\le\frac{b_T}{\delta\mu_p^T(y)}
       \quad\Longrightarrow\quad
       \E_pJ_T(X_T)\le\frac{2^{|T|}b_T}{\delta},
\]
because $\E_p[1/\mu_p^T(X_T)]=2^{|T|}$.  Summing this estimate with
multiplicity $|T|$ gives the stated load bound.  Appendix~\ref{app:hatami}
gives the details, following \cite[Sections~4.1--4.4]{Hatami2012}.

\section{Controlled monotonicization of adaptive structure}
\label{sec:monotonicization}

\paragraph{Bias-matched shifts.}
Our first ingredient is independent of pseudo-juntas.  Fix a coordinate $i$.
For a Boolean function $v$, write $v_0,v_1$ for its two $i$-slices and define
\begin{equation}
\label{eq:shifts}
\begin{aligned}
  (U_i v)_0&=v_0, &\qquad (U_i v)_1&=v_0\vee v_1,\\
  (L_i v)_0&=v_0\wedge v_1, &(L_i v)_1&=v_1.
\end{aligned}
\end{equation}
Both shifts fix the nondecreasing fibers $00,01,11$.  On the unique decreasing
fiber $10$, $U_i$ produces $11$ and $L_i$ produces $00$.

\begin{lemma}[One-coordinate shift]
\label{lem:shift}
Let $0<p<1$, let $i\in[n]$, let $f:\bits^n\to\bits$ be increasing, and let
$v:\bits^n\to\bits$ be arbitrary.  Choose $O_i=U_i$ with probability $1-p$
and $O_i=L_i$ with probability $p$, independently of the input.  Then
\[
       \E_{O_i}\Pr_p[f\ne O_i v]\le\Pr_p[f\ne v].
\]
Moreover, both $U_i v$ and $L_i v$ are nondecreasing in coordinate $i$, and
they preserve monotonicity in every other coordinate in which $v$ is already
nondecreasing.
\end{lemma}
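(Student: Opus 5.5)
The plan is to reduce everything to a single fiber of coordinate $i$. Fix the values $x_{-i}$ of all coordinates other than $i$. The shifts act fiberwise: the pair $(v_0(x_{-i}),v_1(x_{-i}))$ is replaced by a pair depending only on that pair and on the choice of $O_i$. Since $\mu_p^n$ is a product measure, we can write
\[
  \Pr_p[f\ne v]=\E_{x_{-i}}\bigl[(1-p)\one[f_0\ne v_0]+p\,\one[f_1\ne v_1]\bigr],
\]
and similarly for $O_i v$. It therefore suffices to prove the inequality fiber by fiber, after averaging over $O_i$. On the fibers $00,01,11$ both shifts leave $v$ unchanged, so there is nothing to prove there.

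On a fiber with $(v_0,v_1)=(1,0)$, the fixed pair $(f_0,f_1)$ satisfies $f_0\le f_1$ because $f$ is increasing, so it is one of $00$, $01$, $11$. The weighted error of $v$ on the fiber is $(1-p)\one[f_0\ne1]+p\,\one[f_1\ne0]$. The upper shift produces $11$, with error $(1-p)\one[f_0\ne1]+p\,\one[f_1\ne1]$. The lower shift produces $00$, with error $(1-p)\one[f_0\ne0]+p\,\one[f_1\ne0]$. I will check the three cases directly.
\begin{itemize}
\item For $f=00$: the old error is $1-p$. The upper shift gives error $1$ and the lower shift gives $0$, so the average is $(1-p)\cdot1+p\cdot0=1-p$. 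This is an equality.
\item For $f=11$: the old error is $p$. The upper shift gives $0$ and the lower shift gives $1$, so the average is $p$. Again an equality.
\item For $f=01$: the old error is $1$. The upper shift gives $1-p$ and the lower shift gives $p$, so the average is $(1-p)^2+p^2\le1$.
\end{itemize}
Summing over fibers gives the claimed inequality. The only real content of the lemma is that the weights $1-p$ and $p$ are matched to the endpoint masses. This is exactly what makes the $00$ and $11$ cases come out as equalities, and I expect no genuine obstacle here.

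For the structural claims, monotonicity in coordinate $i$ follows from the definitions. For $U_i$ we have $v_0\le v_0\vee v_1$, and for $L_i$ we have $v_0\wedge v_1\le v_1$.

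Now take another coordinate $j\ne i$ in which $v$ is nondecreasing. Flipping $x_j$ from $0$ to $1$, with $x_i$ fixed, raises each slice pointwise: $v_0$ and $v_1$, as functions of the remaining coordinates, are both nondecreasing in $x_j$. The new slices $v_0$, $v_0\vee v_1$, $v_0\wedge v_1$ and $v_1$ are monotone Boolean combinations of $v_0$ and $v_1$. Hence they are also nondecreasing in $x_j$, so both $U_i v$ and $L_i v$ remain nondecreasing in $j$.
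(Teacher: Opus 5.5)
Your proposal is correct and follows the paper's proof essentially verbatim: you condition on the coordinates other than $i$, check the three possible $f$-fibers $00,01,11$ against a $10$ fiber of $v$ to get the same error table (with $(1-p)^2+p^2\le1$ as the only strict case), and derive the monotonicity claims from $v_0\le v_0\vee v_1$, $v_0\wedge v_1\le v_1$ and the monotonicity of $\wedge,\vee$. Nothing is missing.
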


\begin{proof}
Condition on all coordinates other than $i$.  The $i$-fiber of $f$ is one of
$00,01,11$.  Only a $10$ fiber of $v$ changes, and the conditional errors are
\[
\begin{array}{c@{\qquad}c@{\qquad}c}
\toprule
\text{fiber of }f & \text{old error} &
 (1-p)\,\text{error}(11)+p\,\text{error}(00)\\
\midrule
00 & 1-p & 1-p\\
01 & 1   & (1-p)^2+p^2\\
11 & p   & p\\
\bottomrule
\end{array}
\]
The third column never exceeds the second.  Integrating proves the error
bound.  The inequalities $v_0\wedge v_1\le v_1$ and
$v_0\le v_0\vee v_1$ give monotonicity in coordinate $i$; conjunction and
disjunction preserve monotonicity in every other coordinate.
\end{proof}

Starting from $h$, process the coordinates in an arbitrary fixed order,
making the choices in Lemma~\ref{lem:shift} independently.  Let
$D\subseteq[n]$ be the coordinates on which the lower shift was chosen.
Identifying $D$ with its indicator vector, $D\sim\mu_p^n$; every resulting
function $H_D$ is increasing, and iterated conditional expectation gives
\begin{equation}
\label{eq:average-shift-error}
       \E_D\Pr_p[f\ne H_D]\le\Pr_p[f\ne h].
\end{equation}
The function $H_D$ may depend on the fixed processing order; no commutativity of
the shifts is used.

\paragraph{The adaptive information after shifting.}
For $D\subseteq[n]$, define the forced activation system
$\cJ^D=(J_S^D)_S$ by
\begin{equation}
\label{eq:forced-activation}
       J_S^D(z)=J_S\bigl(z^{D\cap S\leftarrow1}\bigr),
\end{equation}
where $z^{D\cap S\leftarrow1}$ is obtained by setting the coordinates in
$D\cap S$ to one.
Its active union is
$A_{\cJ^D}(x)=A_{\cJ}(x^{D\leftarrow1})$.  Crucially, its adaptive atom is
\begin{equation}
\label{eq:forced-atom}
  \Phi_{\cJ^D}(x)
  =\left(A_{\cJ}(x^{D\leftarrow1}),
          x_{A_{\cJ}(x^{D\leftarrow1})}\right).
\end{equation}
The active set is computed at the forced input, but the labels recorded on that
set are the actual labels of $x$.

\begin{lemma}[Closure under shifts]
\label{lem:closure}
Let $i\in[n]$, let $\cJ=(J_S)_{S\subseteq[n]}$ be an activation family with
increasing activations, and let $v:\bits^n\to\bits$ be
$\cF_{\cJ}$-measurable.  Then $U_i v$ is $\cF_{\cJ}$-measurable and $L_i v$
is $\cF_{\cJ^{\{i\}}}$-measurable.  Consequently, if a sequence of upper and
lower shifts is applied to $v$, and $D$ is the set of coordinates on which a
lower shift occurs, then the final function is $\cF_{\cJ^D}$-measurable.
\end{lemma}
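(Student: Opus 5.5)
The approach is to show, for each shift, that the atom of the relevant family at $x$ determines every value of $v$ the shift consults, and then induct over the sequence of shifts. The basic tool is a \emph{lowering principle}. Because every $J_S$ is increasing, if $y\le y'$ then every activation that is active at $y$ is active at $y'$. Now suppose $x_i=1$. Then $\Phi_{\cJ}(x)$ determines $\Phi_{\cJ}(x^{i\leftarrow0})$, since we can recompute every activation at $x^{i\leftarrow0}$ as follows:
\begin{itemize}
\item for $S\not\ni i$, $J_S$ takes the same value at $x$ and at $x^{i\leftarrow0}$;
\item for $S\ni i$ with $J_S(x_S)=0$, monotonicity gives $J_S((x^{i\leftarrow0})_S)=0$;
\item for $S\ni i$ with $J_S(x_S)=1$, we have $S\subseteq A_{\cJ}(x)$, so $x_S$ is recorded in the atom and $J_S((x^{i\leftarrow0})_S)$ can be evaluated.
\end{itemize}
Hence $A_{\cJ}(x^{i\leftarrow0})$ is computable from the atom and is contained in $A_{\cJ}(x)$. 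Its labels are those of $x$, except coordinate $i$, which is set to $0$.

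A second fact handles the case where coordinate $i$ is not recorded. If $i\notin A_{\cJ}(y)$ for $y=x^{i\leftarrow1}$, then no activation containing $i$ is active at $y$, and by monotonicity none is active at $x^{i\leftarrow0}$. All other activations agree at the two points. Therefore $\Phi_{\cJ}(x^{i\leftarrow0})=\Phi_{\cJ}(x^{i\leftarrow1})$.

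\emph{Upper shift.} We write $U_iv$ as a function of $\varphi=\Phi_{\cJ}(x)$.
\begin{itemize}
\item If $i\in A_{\cJ}(x)$, then $x_i$ is recorded in $\varphi$. If $x_i=0$, output $\Gamma(\varphi)$. If $x_i=1$, output $\Gamma(\varphi)\vee\Gamma(\varphi^{\downarrow})$, where $\varphi^{\downarrow}=\Phi_{\cJ}(x^{i\leftarrow0})$ is computed by the lowering principle.
\item If $i\notin A_{\cJ}(x)$, we claim $U_iv(x)=v(x)$. For $x_i=0$ this is the definition. For $x_i=1$, the second fact gives $v(x^{i\leftarrow0})=v(x)$, so $v_0\vee v_1=v(x)$.
\end{itemize}
In every case $U_iv(x)$ is a function of $\varphi$, so $U_iv$ is $\cF_{\cJ}$-measurable.

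\emph{Lower shift.} Let $\varphi'=\Phi_{\cJ^{\{i\}}}(x)=(A',x_{A'})$, where $A'=A_{\cJ}(x^{i\leftarrow1})$ by \eqref{eq:forced-atom}.
\begin{itemize}
\item If $i\in A'$, then $x_i$ is read from $\varphi'$. Moreover $\Phi_{\cJ}(x^{i\leftarrow1})=(A',x_{A'}^{i\leftarrow1})$ is determined by $\varphi'$, and the lowering principle then gives $\Phi_{\cJ}(x^{i\leftarrow0})$. So both $v_1(x)$ and $v_0(x)$ are functions of $\varphi'$, and hence so is $L_iv(x)$, which equals $v_1$ if $x_i=1$ and $v_0\wedge v_1$ if $x_i=0$.
\item If $i\notin A'$, the second fact gives $\Phi_{\cJ}(x^{i\leftarrow0})=\Phi_{\cJ}(x^{i\leftarrow1})=(A',x_{A'})=\varphi'$. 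Thus $v_0=v_1=\Gamma(\varphi')$ and $L_iv(x)=\Gamma(\varphi')$, independently of the unknown $x_i$.
\end{itemize}
So $L_iv$ is $\cF_{\cJ^{\{i\}}}$-measurable.

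\emph{Iteration.} We induct on the number of shifts, using two routine checks:
\begin{itemize}
\item each $J^D_S(z)=J_S(z^{D\cap S\leftarrow1})$ is again increasing, so the one-step result applies to the family $\cJ^D$;
\item forcing composes: $(\cJ^D)^{\{j\}}=\cJ^{D\cup\{j\}}$, because setting $D\cap S$ to one and then $\{j\}\cap S$ to one is the same as setting $(D\cup\{j\})\cap S$ to one.
\end{itemize}
By induction, after any prefix of the sequence the current function is $\cF_{\cJ^{D'}}$-measurable, where $D'$ is the set of lower-shifted coordinates so far. An upper shift keeps the family; a lower shift at $j$ replaces $D'$ by $D'\cup\{j\}$. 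At the end this gives $\cF_{\cJ^D}$-measurability.

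The main obstacle is the case where the shifted coordinate is not recorded in the atom ($i\notin A$, respectively $i\notin A'$). Measurability could fail there, because the shift formula depends on the unknown $x_i$. The key point, which I would verify carefully, is that monotonicity of the activations forces the two $i$-endpoints into the same atom in exactly that case.
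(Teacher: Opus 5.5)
Your proposal is correct and follows essentially the same route as the paper's proof in Appendix~\ref{app:atoms}. Your lowering principle is the paper's map $R_i$, and your second fact is its observation that the two $i$-endpoints share an atom when $i$ is not active at the upper endpoint. Your case-by-case readouts for $U_i v$ and $L_i v$, together with the identity $(\cJ^{D})^{\{j\}}=\cJ^{D\cup\{j\}}$, match $\psi_U$, $\psi_L$ (via $Q_i$) and the iteration step. You additionally check explicitly that the forced activations remain increasing, which the paper uses only implicitly when it iterates.
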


The proof is an atom-recovery argument; full maps are given in
Appendix~\ref{app:atoms}.  The intuition is asymmetric.  At an upper endpoint,
every activation present at the lower endpoint is still present, so the upper
atom contains enough labels to reconstruct the lower atom.  Thus an upper shift
uses no new information.  When $i\notin A_{\cJ}(x)$, one must distinguish the
endpoints.  At a lower input the upper shift keeps the current value, whereas
at an upper input monotonicity of the activations implies that lowering $x_i$
leaves the adaptive atom unchanged.  At a lower endpoint, however, setting
$x_i$ to one may create new activations.  The forced atom
\eqref{eq:forced-atom} reveals
exactly the coordinates that can become active at that upper endpoint, while
retaining their true labels.  This determines both old endpoint atoms and hence
the lower shift.  Repeated forcings satisfy
$(\cJ^D)^E=\cJ^{D\cup E}$, which proves the final assertion.

\paragraph{The cost of forcing.}
The next estimate explains why arity and multiplicity-counted activation load
are the correct parameters.

\begin{lemma}[Load transport]
\label{lem:load-transport}
Let $0<p<1$, let $d\in\mathbb Z_{\ge0}$, and let
$\cJ=(J_S)_{S\subseteq[n]}$ be an activation family satisfying
$J_S\equiv0$ whenever $|S|>d$.  Let $D\sim\mu_p^n$ and set $p'=2p-p^2$.
Writing $\Lambda_q$ for the same load evaluated under bias $q$, we have
\begin{equation}
\label{eq:load-transport}
       \E_D\Lambda_p(\cJ^D)
       =\Lambda_{p'}(\cJ)
       \le(2-p)^d\Lambda_p(\cJ)
       \le2^d\Lambda_p(\cJ).
\end{equation}
Consequently,
\[
       \E_D\E_p|A_{\cJ^D}(X)|
       \le(2-p)^d\Lambda_p(\cJ).
\]
\end{lemma}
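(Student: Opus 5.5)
The plan is to compute the expected load term by term, using linearity over $S$, and then compare the two product measures pointwise on each $\bits^S$ with $|S|\le d$.

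Fix $S$ with $J_S\not\equiv0$, so $|S|\le d$. By definition,
\[
  \Lambda_p(\cJ^D)=\sum_{S}|S|\,\Pr_p\bigl[J_S\bigl(X_S^{D\cap S\leftarrow1}\bigr)=1\bigr].
\]
Taking $\E_D$ and exchanging the sum with the expectation, it suffices to identify the law of $Y=X_S^{D\cap S\leftarrow1}$ when $X\sim\mu_p^n$ and $D\sim\mu_p^n$ are independent. For each $i\in S$ we have $Y_i=X_i\vee\one[i\in D]$. This is Bernoulli with parameter $1-(1-p)^2=2p-p^2=p'$, and the coordinates are independent. Hence $Y\sim\mu_{p'}^S$, and
\[
  \E_D\Pr_p\bigl[J_S^D(X_S)=1\bigr]=\Pr_{p'}\bigl[J_S(X_S)=1\bigr].
\]
Summing with weights $|S|$ gives the identity $\E_D\Lambda_p(\cJ^D)=\Lambda_{p'}(\cJ)$. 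The interchange of $\E_D$ with the sum is harmless, since the sum is finite with nonnegative terms.

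For the inequality, I compare the point masses on $\bits^S$. For $y\in\bits^S$ with $|y|=k$ and $|S|=m\le d$,
\[
  \frac{\mu_{p'}^S(y)}{\mu_p^S(y)}=\Bigl(\frac{p'}{p}\Bigr)^{k}\Bigl(\frac{1-p'}{1-p}\Bigr)^{m-k}=(2-p)^k(1-p)^{m-k}\le(2-p)^m\le(2-p)^d,
\]
using $p'/p=2-p$, $(1-p')/(1-p)=1-p\le1$, and $2-p\ge1$. Therefore $\Pr_{p'}[J_S=1]\le(2-p)^d\Pr_p[J_S=1]$ for every $S$ carrying a nonzero activation. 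Multiplying by $|S|$ and summing yields $\Lambda_{p'}(\cJ)\le(2-p)^d\Lambda_p(\cJ)$, and $(2-p)^d\le2^d$ is trivial.

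The consequence follows by applying \eqref{eq:union-load} to the family $\cJ^D$, which gives $\E_p|A_{\cJ^D}(X)|\le\Lambda_p(\cJ^D)$ for each fixed $D$, and then taking $\E_D$. This requires $\cJ^D$ to still vanish on sets of size larger than $d$, which is immediate from \eqref{eq:forced-activation}. In fact \eqref{eq:union-load} holds for any activation family, so arity is not even needed at that step.

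There is no real obstacle. The only point needing care is that forcing acts only on $D\cap S$ inside each activation. This is what makes the joint law of $(X,D)$ factor through $\mu_{p'}$ separately for each $S$, even though different activations share the same $D$. Since we only take expectations of a linear sum, correlations across different $S$ are irrelevant.
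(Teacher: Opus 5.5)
Your proposal is correct and follows essentially the same argument as the paper. You identify $X^{D\leftarrow1}$ coordinatewise as $\mu_{p'}$-distributed to obtain the equality, then bound the likelihood ratio $\mu_{p'}^S/\mu_p^S=(2-p)^r(1-p)^{s-r}$ by $(2-p)^{d}$ on every $S$ with nonzero activation, and finally deduce the active-union bound from \eqref{eq:union-load}.
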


\begin{proof}
Let $X\sim\mu_p^n$ be independent of $D$.  Coordinatewise,
\[
       (X^{D\leftarrow1})_i
       =X_i\vee\one_{\{i\in D\}}
\]
is one with probability $p+(1-p)p=p'$, and these coordinates are independent.
Thus averaging a forced activation under $\mu_p$ is the same as averaging the
original activation under $\mu_{p'}$, which gives the equality in
\eqref{eq:load-transport}.

Let $|S|=s\le d$ and let $z\in\bits^S$ have Hamming weight $r$.  Since
$p'/p=2-p$ and $(1-p')/(1-p)=1-p$,
\[
  \frac{\mu_{p'}^S(z)}{\mu_p^S(z)}
  =(2-p)^r(1-p)^{s-r}
  \le(2-p)^s\le(2-p)^d.
\]
Integrating the nonnegative function $J_S$, multiplying by $|S|$, and summing
over $S$ proves the inequalities.  The final claim follows from
\eqref{eq:union-load}.
\end{proof}

We can now state and prove the controlled monotonicization theorem used in the
main argument.

\begin{theorem}[Controlled monotonicization]
\label{thm:controlled-monotonicization}
Let $0<p<1$, let $d\in\mathbb Z_{\ge0}$, let $e,L\ge0$, and let
$f:\bits^n\to\bits$ be increasing.  Let
$\cJ=(J_S)_{S\subseteq[n]}$ be an activation family with increasing
activations, and let $h:\bits^n\to\bits$ be $\cF_{\cJ}$-measurable.  Suppose
$J_S\equiv0$ for $|S|>d$ and
\[
       \Pr_p[f\ne h]\le e,
       \qquad
       \Lambda_p(\cJ)\le L.
\]
For every $\tau>0$, there exist $D\subseteq[n]$ and an increasing Boolean
function $u$ such that
\begin{enumerate}[label=\textup{(\roman*)}]
\item $u$ is $\cF_{\cJ^D}$-measurable;
\item $\Pr_p[f\ne u]\le e+\tau$; and
\item
\[
       \E_p|A_{\cJ^D}(X)|
       \le(2-p)^d L\left(1+\frac e\tau\right).
\]
\end{enumerate}
\end{theorem}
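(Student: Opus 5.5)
The plan is to realize the random shifting procedure, use the averages already established, and then select one good realization $D$ by a weighted averaging argument. Start from $h$ and process the coordinates in a fixed order. Choose lower shifts on a random set $D\sim\mu_p^n$ as in Lemma~\ref{lem:shift}, and let $H_D$ be the resulting function. Lemma~\ref{lem:shift} shows that every $H_D$ is increasing. Lemma~\ref{lem:closure} shows that every $H_D$ is $\cF_{\cJ^D}$-measurable; its hypotheses hold because the activations of $\cJ$ are increasing and $h$ is $\cF_{\cJ}$-measurable. So (i) and monotonicity hold for every realization, and we may take $u=H_D$. It remains to choose $D$ so that (ii) and (iii) hold simultaneously.

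Define the error $E(D)=\Pr_p[f\ne H_D]$ and the cost $Q(D)=\E_p|A_{\cJ^D}(X)|$. By \eqref{eq:average-shift-error}, $\E_D E(D)\le e$. By Lemma~\ref{lem:load-transport}, which applies since $J_S\equiv0$ for $|S|>d$,
\[
  \E_D Q(D)\le(2-p)^d\Lambda_p(\cJ)\le(2-p)^dL=:B .
\]
A plain union-bound selection via Markov would lose constant factors on both sides. Instead I would combine the two quantities linearly. Consider
\[
  \Psi(D)=\frac{E(D)-e}{\tau}+\frac{Q(D)}{B(1+e/\tau)} .
\]
Its expectation is at most $0+1/(1+e/\tau)\le1$, so some $D$ has $\Psi(D)\le1$. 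This alone does not give both bounds, because $E(D)-e$ may be negative. That is the main obstacle: the error deviation can be negative and can offset a large cost.

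To handle it, use that $E(D)\ge0$. Take $\Psi'(D)=E(D)/(e+\tau)+Q(D)/(B(1+e/\tau))$. Its expectation is at most
\[
  \frac{e}{e+\tau}+\frac{\tau}{e+\tau}=1,
\]
since $1/(1+e/\tau)=\tau/(e+\tau)$. Hence some $D$ satisfies $\Psi'(D)\le1$. Both terms are nonnegative, so each is at most $1$. This gives $E(D)\le e+\tau$ and $Q(D)\le(2-p)^dL(1+e/\tau)$, which are exactly (ii) and (iii).

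The degenerate cases need a separate word. If $B=0$, then $Q\equiv0$ almost surely; the cost term drops out and Markov on $E$ alone suffices. If $e=0$, the bound $E(D)\le\tau$ still follows from the same argument. Finally, set $u=H_D$ for the chosen $D$.
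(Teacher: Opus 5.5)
Your proposal is correct and follows essentially the same route as the paper. Your $\Psi'(D)$ is the paper's combination $E(D)+(\tau/B)\,Q(D)$ rescaled by $1/(e+\tau)$, selected by the same averaging and nonnegativity step, and you treat $B=0$ separately just as the paper does.
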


\begin{proof}
Apply the random shifts in a fixed coordinate order.  For each realization
$D$, let
\[
       e_D=\Pr_p[f\ne H_D],
       \qquad
       M_D=\E_p|A_{\cJ^D}(X)|.
\]
By the shift construction, Lemma~\ref{lem:closure}, and
Lemma~\ref{lem:load-transport}, every $H_D$ is increasing and
$\cF_{\cJ^D}$-measurable, while
\[
       \E_D e_D\le e,
       \qquad
       \E_D M_D\le B:=(2-p)^d L.
\]
If $B=0$, then nonnegativity gives $M_D=0$ for every realization with
positive probability; choose one with $e_D\le e$.  Otherwise,
\[
       \E_D\left[e_D+\frac{\tau}{B}M_D\right]\le e+\tau.
\]
Choose one $D$ for which the expression inside brackets is at most its
expectation.  Nonnegativity gives
$e_D\le e+\tau$ and
$M_D\le B(1+e/\tau)$.  Taking $u=H_D$ proves the theorem.
\end{proof}

\section{Certificates and completion of the proof}
\label{sec:certificates}

Once both the decoder and the activations are increasing, adaptive information
turns into positive witnesses by a deterministic argument.

\begin{lemma}[Active coordinates whose value is one are certificates]
\label{lem:certificates}
Let $\cJ=(J_S)_{S\subseteq[n]}$ be an activation family with increasing
activations, and let $u:\bits^n\to\bits$ be increasing and
$\cF_{\cJ}$-measurable.  For $x\in\bits^n$, set
\begin{equation}
\label{eq:certificate}
       T(x)=\{i\in A_{\cJ}(x):x_i=1\}.
\end{equation}
If $u(x)=1$, then $T(x)$ is a positive certificate for $u$: every
$y\in\bits^n$ satisfying $y_i=1$ for all $i\in T(x)$ has $u(y)=1$.
Consequently, for every integer $w\ge0$,
\begin{equation}
\label{eq:certificate-dnf}
  g_w(y)=
  \bigvee_{\substack{x\in\bits^n:\ u(x)=1\\ |T(x)|\le w}}
  \ \bigwedge_{i\in T(x)}y_i
\end{equation}
is a monotone DNF of width at most $w$, satisfies $g_w\le u$, and obeys
\begin{equation}
\label{eq:certificate-error}
  \Pr_p[u\ne g_w]
  \le\Pr_p[|T(X)|>w]
  \le\frac{\E_p|A_{\cJ}(X)|}{w+1}.
\end{equation}
\end{lemma}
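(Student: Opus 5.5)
The plan is to prove the certificate property first and then read off the DNF and error bounds from it.

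Fix $x$ with $u(x)=1$, write $A=A_{\cJ}(x)$, and let $z$ be the point that agrees with $x$ on $A$ and is zero outside $A$; thus $z\le x$ and $z_i=1$ exactly for $i\in T(x)$. I will show $\Phi_{\cJ}(z)=\Phi_{\cJ}(x)$. Consider first an $S$ with $J_S(x_S)=1$. Then $S\subseteq A$, so $z_S=x_S$ and $J_S(z_S)=1$. Now consider an $S$ with $J_S(x_S)=0$. Since $z_S\le x_S$ and $J_S$ is increasing, $J_S(z_S)=0$. Hence the active indexing sets at $z$ and at $x$ coincide, so $A_{\cJ}(z)=A$. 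Moreover $z_A=x_A$, so the two atoms agree. By $\cF_{\cJ}$-measurability, $u(z)=u(x)=1$. Now let $y$ satisfy $y_i=1$ for all $i\in T(x)$. Then $y\ge z$ coordinatewise, and monotonicity of $u$ gives $u(y)=1$. This is the certificate claim. The only delicate point is the inactive-set step, which is where increasing activations are used: an activation that is off at $x$ cannot switch on at the smaller point $z$.

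For the DNF, $g_w$ is visibly a disjunction of positive terms of size at most $w$, so its width is at most $w$. Each term of $g_w$ is $\bigwedge_{i\in T(x)}y_i$ for some $x$ with $u(x)=1$. If that term is satisfied by $y$, the certificate property gives $u(y)=1$, so $g_w\le u$. Conversely, if $u(y)=1$ and $|T(y)|\le w$, then the term indexed by $x=y$ appears in $g_w$ and is satisfied by $y$ itself, so $g_w(y)=1$.

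Combining these, $u\ne g_w$ forces $u(y)=1$ and $|T(y)|>w$. This gives the first inequality in \eqref{eq:certificate-error}. The second inequality follows from $|T(X)|\le|A_{\cJ}(X)|$ together with Markov's inequality applied to the integer-valued variable $|A_{\cJ}(X)|$ at threshold $w+1$.
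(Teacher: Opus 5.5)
Your proof is correct and follows the paper's argument essentially step for step. You use the same zeroed-out point (the paper's $x^\circ$), the same two-case check that every activation keeps its value, and the same implicant/self-term argument giving $\{u\ne g_w\}\subseteq\{|T|>w\}$. The only cosmetic difference is that you apply Markov's inequality to the integer-valued $|A_{\cJ}(X)|$ after using $|T(X)|\le|A_{\cJ}(X)|$, whereas the paper applies it to $|T(X)|$ first; both give the same bound.
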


\begin{proof}
Fix $x$ with $u(x)=1$ and let $x^\circ$ be the incidence vector of $T(x)$.
Equivalently, $x^\circ$ agrees with $x$ on $A_{\cJ}(x)$ and is zero outside
that set.  We claim that every activation has the same value at $x^\circ$ and
$x$.  If $J_S(x_S)=1$, then $S\subseteq A_{\cJ}(x)$, so
$x_S^\circ=x_S$.  If $J_S(x_S)=0$, then $x^\circ\le x$ and monotonicity gives
$J_S(x_S^\circ)=0$.  Hence
$\Phi_{\cJ}(x^\circ)=\Phi_{\cJ}(x)$, and
$\cF_{\cJ}$-measurability implies $u(x^\circ)=u(x)=1$.

If $y_i=1$ for every $i\in T(x)$, then $y\ge x^\circ$, so monotonicity of $u$
gives $u(y)=1$.  Thus every term in \eqref{eq:certificate-dnf} is an implicant
of $u$, and every accepted $x$ with $|T(x)|\le w$ satisfies its own term.
Therefore
$\{x:u(x)\ne g_w(x)\}\subseteq\{x:|T(x)|>w\}$.  Since $|T(X)|$ is
integer-valued, Markov's inequality at threshold $w+1$, together with
$T(x)\subseteq A_{\cJ}(x)$, gives \eqref{eq:certificate-error}.
\end{proof}

\begin{proof}[Proof of Theorem~\ref{thm:main}]
If $f$ is constant, use the corresponding width-zero DNF\@.  Otherwise put
\[
       C_K=\lceil K\rceil,
       \qquad
       a=\frac{\varepsilon}{4}.
\]
Since $\lceil I_p(f)\rceil\le C_K$, Lemma~\ref{lem:hatami-input} gives an
approximator $h$ and increasing
activations $\cJ$ such that
\[
  \Pr_p[f\ne h]\le a,
  \qquad
  h\text{ is }\cF_{\cJ}\text{-measurable},
  \qquad
  \operatorname{arity}(\cJ)\le d,
  \qquad
  \Lambda_p(\cJ)\le L,
\]
where
\begin{equation}
\label{eq:main-parameters}
  d=\left\lceil\frac{4000C_K}{\varepsilon}\right\rceil,
  \qquad
  L=\exp\!\left(\frac{16\cdot10^{10}C_K^2}{\varepsilon^2}\right).
\end{equation}
Apply Theorem~\ref{thm:controlled-monotonicization} with $e=\tau=a$.  Since
$(2-p)^d\le2^d$, it gives $D\subseteq[n]$ and an increasing
$\cF_{\cJ^D}$-measurable function $u$, where the forced activations remain
increasing and
\begin{equation}
\label{eq:main-monotone}
       \Pr_p[f\ne u]\le\frac{\varepsilon}{2},
       \qquad
       \E_p|A_{\cJ^D}(X)|\le2^{d+1}L.
\end{equation}

Choose
\[
       w=\left\lceil\frac{4\cdot 2^d L}{\varepsilon}\right\rceil.
\]
Applying Lemma~\ref{lem:certificates} to $u$ gives a monotone width-$w$ DNF
$g_w$ with
\[
\begin{aligned}
       \Pr_p[f\ne g_w]
       &\le\Pr_p[f\ne u]+\Pr_p[u\ne g_w]\\
       &\le\frac{\varepsilon}{2}
          +\frac{\E_p|A_{\cJ^D}(X)|}{w+1}
       \le\frac{\varepsilon}{2}
          +\frac{2^{d+1}L}{w+1}
       \le\varepsilon.
\end{aligned}
\]
Finally, $C_K\le K+1$ and $0<\varepsilon<1$, so
\[
  \log w
  =O\!\left(
       \log\frac1\varepsilon
       +\frac{K+1}{\varepsilon}
       +\frac{(K+1)^2}{\varepsilon^2}
       \right)
  =O\!\left(\frac{(K+1)^2}{\varepsilon^2}\right).
\]
This proves the theorem.
\end{proof}

\begin{proof}[Proof of Conjecture~\ref{conj:friedgut}]
Let $f=\one_{\cA}$.  By \eqref{eq:intro-margulis-russo},
\begin{equation}
\label{eq:friedgut-normalization}
       I_p(f)=2(1-p)\,p\E_p[\sens_f(X)].
\end{equation}
If $p\le1/2$, then $I_p(f)\le2c$, and
Theorem~\ref{thm:main} applies.

Suppose $p>1/2$.  The product Poincar\'e inequality and
\eqref{eq:friedgut-normalization} give
\[
       \Var_p(f)\le\frac12 I_p(f)
       \le c(1-p).
\]
Let $\alpha=\E_p f$ and $m=\min\{\alpha,1-\alpha\}$.  Since
$\Var_p(f)=m(1-m)\ge m/2$, if
$1-p\le\varepsilon/(2c)$ then $m\le\varepsilon$, and the closer constant
function is an admissible width-zero approximation.

In the remaining case, set $\lambda=\varepsilon/(2c)$.  Necessarily
$\lambda<1/2$, and $p\in[\lambda,1-\lambda]$, while $I_p(f)\le c$.
The total influence in the bit-flip convention therefore satisfies
$\sum_{i=1}^n\Pr_p[f(X)\ne f(X^{\oplus i})]
=I_p(f)/(2p(1-p))\le c/(2\lambda(1-\lambda))$.
Friedgut's biased junta theorem, uniformly for
$p\in[\lambda,1-\lambda]$, gives a junta $h$ on
$M=M(c/(2\lambda(1-\lambda)),\varepsilon,\lambda)$ coordinates, where $M$ is
independent of $n$ and $p$, with
$\Pr_p[f\ne h]\le\varepsilon$ \cite{Friedgut1998,ODonnell2014}.
The junta may be chosen increasing: for its coordinate set $J$, the conditional
mean
$m_J(z)=\E_p[f(X)\mid X_J=z]$ is increasing, and its Bayes threshold
$\one[m_J(z)\ge1/2]$ has no larger error than any other $J$-measurable Boolean
function.  Every increasing $J$-junta has a monotone DNF of width at most
$|J|$.  Taking the maximum of this bound and the low-bias bound proves the
conjecture.
\end{proof}

The proof controls width, not the number of terms, and is existential rather
than algorithmic.  It gives an inner approximation to the intermediate
monotone decoder $u$, not necessarily to the original function $f$.  The
restriction $p\le1/2$ in Theorem~\ref{thm:main} is genuine for the hypothesis
$I_p(f)=O(1)$; Appendix~\ref{app:atoms} records the standard high-bias AND
example.  Determining the optimal dependence of the width on $K$ and
$\varepsilon$, or obtaining useful size and algorithmic bounds, remains open.

\appendix

\section{Hatami's construction with arity and load}
\label{app:hatami}

We prove Lemma~\ref{lem:hatami-input} by tracking the activation family in
Hatami's Fourier-analytic proof.  We verify its arity and
multiplicity-counted activation load bounds while retaining Hatami's
approximation guarantee.  All
probabilities, expectations, and $L^2$-norms below are taken under $\mu_p^n$
or the appropriate coordinate marginal.

Let
\[
       f(x)=\sum_{S\subseteq[n]}F_S(x_S)
\]
be the $p$-biased Walsh expansion.  In Hatami's notation,
$F_S(x_S)=\widehat f(S)\prod_{i\in S}r_p(x_i)$, where
$\widehat f(S)$ is the $p$-biased Fourier coefficient and
\[
       r_p(0)=-\sqrt{\frac{p}{1-p}},
       \qquad
       r_p(1)=\sqrt{\frac{1-p}{p}}.
\]
Parseval's identity and the resampling definition of influence give
\begin{equation}
\label{eq:hatami-fourier}
       \sum_S\|F_S\|_2^2=\|f\|_2^2\le1,
       \qquad
       I_p(f)=2\sum_S |S|\|F_S\|_2^2.
\end{equation}
If $I_p(f)=0$, full support and connectivity of the cube imply that $f$ is
constant; take $h=f$ and all activations zero.  Assume henceforth that
$I_p(f)>0$, and put $C_f=\lceil I_p(f)\rceil$.  Following
\cite[Section~4]{Hatami2012}, set
\[
  \varepsilon_0=10^{-3}a,
  \qquad
  k=C_f\varepsilon_0^{-1}=10^3a^{-1}C_f,
  \qquad
  \delta=2^{-10^2k^2},
  \qquad
  \varepsilon_1=3^{-10k^2}\varepsilon_0^{10k}.
\]
Retain the Fourier sets
\[
       \cS=\{S\subseteq[n]: |S|\le k,
                    \ \|F_S\|_\infty>\varepsilon_1\}.
\]
For $T\subseteq[n]$, define
\begin{equation}
\label{eq:bT}
       b_T=\sum_{\substack{S\in\cS\\S\supseteq T}}
       \Pr_p\bigl[|F_S(X_S)|\ge\varepsilon_1\bigr].
\end{equation}
For $|T|\le k$ and $y\in\bits^T$, set
\begin{equation}
\label{eq:hatami-activation}
       J_T(y)=\one[b_T\ge\delta\mu_p^T(y)],
\end{equation}
and put $J_T\equiv0$ for $|T|>k$.  This is precisely the activation family in
\cite[Section~4.3]{Hatami2012}.  The number $b_T$ is global and independent of
$y$.  Since $p\le1/2$, the point mass $\mu_p^T(y)$ is nonincreasing in $y$, so
every $J_T$ is increasing.  Also, nonzero activations have arity at most
$k\le d$.

It remains to record the load estimate in the form needed here.  From
\eqref{eq:hatami-activation}, pointwise in $y$,
\[
       J_T(y)\le\frac{b_T}{\delta\mu_p^T(y)}.
\]
If $X_T\sim\mu_p^T$, then full support gives
\[
       \E_p\left[\frac1{\mu_p^T(X_T)}\right]
       =\sum_{y\in\bits^T}1=2^{|T|}.
\]
Consequently,
\begin{align}
\label{eq:hatami-load}
 \Lambda_p(\cJ)
 &=\sum_T |T|\E_pJ_T(X_T)\notag\\
 &\le \frac{k}{\delta}
      \sum_{\substack{T\subseteq[n]\\|T|\le k}}2^{|T|}
      \sum_{\substack{S\in\cS\\S\supseteq T}}
      \Pr_p[|F_S(X_S)|\ge\varepsilon_1]\notag\\
 &=\frac{k}{\delta}
      \sum_{S\in\cS}\Pr_p[|F_S(X_S)|\ge\varepsilon_1]
      \sum_{T\subseteq S}2^{|T|}\notag\\
 &\le\frac{k2^{2k}}{\delta\varepsilon_1^2}
      \sum_{S\in\cS}\|F_S\|_2^2
 \le\frac{k2^{2k}}{\delta\varepsilon_1^2}.
\end{align}
Here $\sum_{T\subseteq S}2^{|T|}=3^{|S|}\le2^{2k}$, and the penultimate
inequality is Markov's inequality.  For completeness, the last parameter estimate follows directly from the
chosen constants.  Since $I_p(f)>0$, we have $C_f\ge1$, hence
$k=C_f/\varepsilon_0\ge10^3$ and $\varepsilon_0^{-1}\le k$.  Therefore
\[
 \frac{k2^{2k}}{\delta\varepsilon_1^2}
 =k2^{2k+10^2k^2}3^{20k^2}\varepsilon_0^{-20k}
 \le2^{10^3k^2}
 \le\exp\!\left(10^{10}a^{-2}C_f^2\right)=L.
\]
This is the displayed estimate on pages~518--519 of \cite{Hatami2012}.
Thus the same construction has the required multiplicity-counted activation
load.

Finally, for exactly the activations \eqref{eq:hatami-activation}, combining
the estimates from Steps I--III of Hatami's proof with the final calculation in
Section~4.4 yields
\begin{equation}
\label{eq:hatami-conditional}
       \left\|f-\E_p[f\mid\cF_{\cJ}]\right\|_2^2
       \le10\varepsilon_0;
\end{equation}
see equations (10)--(16) and Section~4.4 of \cite{Hatami2012}.  Define
\[
       h(x)=\one\!\left[
          \E_p[f\mid\cF_{\cJ}](x)>\frac12
       \right].
\]
Then $h$ is $\cF_{\cJ}$-measurable.  Since $f$ is Boolean,
\[
       \Pr_p[f\ne h]
       \le4\left\|f-\E_p[f\mid\cF_{\cJ}]\right\|_2^2
       \le40\varepsilon_0\le a.
\]
This proves Lemma~\ref{lem:hatami-input}.

\section{Adaptive atoms under shifting and two obstructions}
\label{app:atoms}

We first give the atom maps used in Lemma~\ref{lem:closure}.  Throughout this
appendix, all activations in $\cJ$ are increasing.

\paragraph{Recovering a lower atom.}
Suppose $y_i=1$ and
$\Phi_{\cJ}(y)=(B,a)$.  Define $a^{i\leftarrow0}$ by replacing $a_i$ with
zero when $i\in B$ and otherwise leaving $a$ unchanged.  Set
\[
  \cE_i^-(B,a)
  =\{S\subseteq B:J_S((a^{i\leftarrow0})_S)=1\},
  \qquad
  B_i^-(B,a)=\bigcup_{S\in\cE_i^-(B,a)}S,
\]
and
\begin{equation}
\label{eq:R-map}
  R_i(B,a)=
  \left(B_i^-(B,a),
        (a^{i\leftarrow0})_{B_i^-(B,a)}\right).
\end{equation}
If an activation is present at $y^{i\leftarrow0}$, monotonicity implies that it
is present at $y$, and hence its indexing set lies in $B$.  Its lower-endpoint
status can therefore be evaluated from the recorded labels $a$.  It follows
that
\begin{equation}
\label{eq:R-correct}
       \Phi_{\cJ}(y^{i\leftarrow0})=R_i(B,a).
\end{equation}
If $i\notin B$, then no active indexing set at $y$ contains $i$; activations not
containing $i$ are unchanged when $i$ is lowered, and monotonicity prevents a
new activation containing $i$ from appearing only at the lower point.  Hence
$R_i(B,a)=(B,a)$ in this case.

\paragraph{Upper shifts.}
Since $v$ is $\cF_{\cJ}$-measurable, write
$v=\psi\circ\Phi_{\cJ}$.  On an atom $(B,a)$ define
\[
 \psi_U(B,a)=
 \begin{cases}
   \psi(B,a),& i\notin B,\\
   \psi(B,a),& i\in B\text{ and }a_i=0,\\
   \psi(B,a)\vee\psi(R_i(B,a)),
        & i\in B\text{ and }a_i=1.
 \end{cases}
\]
If $i\in B$, the atom records $x_i$, and \eqref{eq:R-correct} supplies the
other endpoint atom when needed.  Suppose now that $i\notin B$.  If $x_i=0$,
then $(U_i v)(x)=v(x)=\psi(B,a)$.  If $x_i=1$, no activation whose indexing
set contains $i$ is active at $x$.  By monotonicity, no such activation is
active at $x^{i\leftarrow0}$, while activations not containing $i$ are
unchanged.  Hence
$\Phi_{\cJ}(x^{i\leftarrow0})=\Phi_{\cJ}(x)=(B,a)$, and again
$(U_i v)(x)=\psi(B,a)$.  Therefore
$U_i v=\psi_U\circ\Phi_{\cJ}$, proving that $U_i v$ is
$\cF_{\cJ}$-measurable.

\paragraph{Lower shifts.}
For one-coordinate forcing, if
$\Phi_{\cJ^{\{i\}}}(x)=(B,a)$, define
\[
       Q_i(B,a)=(B,\widehat a),
       \qquad
       \widehat a_j=
       \begin{cases}
          1,&j=i,\\
          a_j,&j\ne i
       \end{cases}
       \quad(j\in B).
\]
By the definition of the forced atom,
\begin{equation}
\label{eq:Q-map}
       Q_i(B,a)=\Phi_{\cJ}(x^{i\leftarrow1}).
\end{equation}
For a forced atom $\omega=(B,a)$, set
\[
 \psi_L(\omega)=
 \begin{cases}
   \psi(Q_i\omega),&i\notin B,\\
   \psi(Q_i\omega),&i\in B\text{ and }a_i=1,\\
   \psi(R_i(Q_i\omega))\wedge\psi(Q_i\omega),
       &i\in B\text{ and }a_i=0.
 \end{cases}
\]
The map $Q_i$ gives the old upper atom, and $R_i\circ Q_i$ gives the old lower atom.
If $i\notin B$, no activation containing $i$ is active even at the upper
endpoint, so the two old atoms coincide.  Hence
$L_i v=\psi_L\circ\Phi_{\cJ^{\{i\}}}$.  This proves the one-step statements in
Lemma~\ref{lem:closure}.  The identity
$(\cJ^D)^E=\cJ^{D\cup E}$ then gives the claim for an arbitrary interleaving of
upper and lower shifts.

\begin{remark}[Necessity of refining the adaptive partition]
Lower shifts can require a strictly finer adaptive partition.  At $p=1/2$,
use coordinates $(b,a_1,a_2)$, the increasing target
$f(b,a_1,a_2)=a_1\vee a_2$, and the single nonzero activation
\[
       J_{\{b,a_1,a_2\}}(b,a_1,a_2)=b.
\]
When $b=0$, the active union is empty, so all four such inputs form one atom;
when $b=1$, every input is distinguished.  The Bayes-optimal decoder on this
partition labels the $b=0$ atom by one and agrees with $f$ when $b=1$, for
error $1/8$.  If $h$ denotes this decoder, then $L_bh=f$, which is not
measurable with respect to the original adaptive partition.  Any increasing
decoder on the same partition is constant on the
$b=0$ atom.  Labeling it one forces the decoder to be identically one, for
error $1/4$; labeling it zero gives error at least $3/8$.  Thus the best
increasing decoder on the same partition has strictly larger error.
\end{remark}

\begin{remark}[The low-bias restriction]
The bounded-$I_p(f)$ statement is false at high bias.  Let
$f(x)=x_1\wedge\cdots\wedge x_n$ and $p=1-1/n$.  Then
\[
       \Pr_p[f=1]=p^n\longrightarrow e^{-1},
       \qquad
       I_p(f)=2p(1-p)np^{n-1}\longrightarrow\frac2e.
\]
For fixed $w$, every nonempty monotone DNF of width at most $w$ accepts with
probability at least $p^w\to1$, while the empty DNF is identically zero.
Hence no fixed-width monotone DNF approximates $f$ to sufficiently small
constant error.  Friedgut's all-bias quantity is instead
$p\E_p[\sens_f(X)]=np^n$, which diverges in this example.
\end{remark}

\end{document}